\NewDocumentCommand{\ceil}{s O{} m}{%
  \IfBooleanTF{#1} 
    {\left\lceil#3\right\rceil} 
    {#2\lceil#3#2\rceil} 
}
\newcommand{\Z}{{\mathbb Z}}
\newcommand{\N}{{\mathbb N}}
\newcommand{\C}{{\mathbb C}}
\newcommand{\CC}{{\cal C}}
\newcommand{\NN}{{\cal N}}
\newtheorem{definition}{Definition}
\newtheorem{proposition}{Proposition}
\newtheorem{theorem}{Theorem}
\def\e{\,{\rm e}\,}
\def\P{|\Psi\rangle}
\DeclareMathOperator{\Tr}{Tr}
\def\floor#1{\left\lfloor #1\right\rfloor}
\title{On the Existence of
Absolutely Maximally Entangled
States of Minimal Support II}
\author{Antonio Bernal%
  \thanks{Electronic address: \texttt{abernal@ub.edu}\\
Supported by project FIS2013-41757-P  }}
\affil{Department de Matemàtiques i Informàtica. Universitat de Barcelona}
\begin{document}
\maketitle

\begin{abstract}
Absolutely maximally entangled, AME, states are pure multipartite states that give rise to the maximally mixed states when half or more of the parties are traced out. AME states find applications in fields like teleportation or quantum secret sharing, and the problem of finding conditions on their existence has been considered in a number of papers.

We consider here AME states of minimal support, that are simpler to analyse. 
An equivalence with coding theory gives a sufficient condition for their existence, that the number of sites be equal to the local dimension plus one, when the local dimension $d$ is a power of a prime number.

In this paper, an equivalence with latin hypercubes is used to prove that the above sufficient condition fails in the first case in which the local dimension is not a prime power, $d=6$. Results for other values of $d$ are also given.
\end{abstract}

\begin{multicols}{2}

\section{Introduction}

The states called \lq\lq $\floor{n/2}$-uniform'', were considered in \cite{scott} in the context of quantum error correcting codes, and later, they where called \lq\lq absolutely maximally entangled'', AME, states in \cite{qss}, in the context of quantum secret sharing schemes.

The properties of such states have been investigated in several papers, see \cite{bernal}, \cite{ame64}, \cite{qss} and \cite{ameexistenceandapplications}.

This paper deals with the problem of existence of $AME$ states that are supported in a minimal set of kets from the computational basis. We prove that the known existence result, valid when the local dimension is a prime power, fails in the first non prime power case, $d=6$. 

We also deal with the cases where $d=4$, $d=5$ and $d=7$.

In sections \ref{sec:amegeneral} and \ref{sec:mdscodes}, we give the basic definitions of general and minimally supported $AME$ states, and some known characterizations for the comodity of the reader.

In section \ref{sec:hypercubes}, we use a relationship between MDS codes and latin hypercubes to get a corresponding translation in terms of $AME$ states of minimal support and note that known properties of latin squares forbids certain $AME$ states of minimal support.

\section{Absolutely maximally entangled states}\label{sec:amegeneral}

We state directly the main definitions and a few basic properties, and refer the reader to \cite{bernal}, \cite{ame64}, \cite{qss}, \cite{ameexistenceandapplications}, and the references given there, for more explanations, proofs and details.

\begin{definition}
Let $n$ and $d$ be integers $n,d\ge 2$. Let $\P$ be a pure multipartite state in $n$ sites, where the local Hilbert space is $d$-dimensional. That is,
\[
\P\in (\C^d)^{\otimes n}.
\] 

We say that $\P$ is absolutely maximally entangled with $n$ sites and local dimension $d$, $AME(n,d)$, if for any partition of $\{1,\ldots,n\}$ into two disjoint subsets $A$ and $B$, with $|B|=m\le|A|=n-m$, the density obtained from $\P\langle\Psi|$ tracing out the sites on the entries in $A$ is multiple of the identity,
\[
\Tr_A\P\langle\Psi| = \frac{1}{d^{m}}Id_{\C^{\otimes m}}.
\]
\end{definition}

It is an open problem to find exactly for what values $n$ and $d$ do $AME(n,d)$ states exist. In the above references many partial results, both positive and negative, can be found.

Recall that, given a vector space $V$, a vector $v\in V$ and a basis ${\cal B}\subset V$, the support of $v$ in the basis $\cal B$ is the number of nonzero coordinates of $v$ in the basis $\cal B$.

A linear algebra argument gives a lower bound for the support of any AME state on the computational basis $|s_1,\ldots,s_n\rangle$, $0\le s_i\le d-1$, $1\le i\le n$.

\begin{proposition}
If $n$ and $d$ are integers, $n,d\ge 2$ and $\P$ is an $AME(n,d)$ state, then the support of $\P$ with respect to the computational basis is at least $d^{\floor{n/2}}$.
\end{proposition}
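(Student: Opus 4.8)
The plan is to exploit the defining property of an $AME(n,d)$ state for a well-chosen bipartition $(A,B)$ and extract a system of linear equations whose solution space must be large. Write $m = \floor{n/2}$, so that $|B| = m$ and $|A| = n - m \ge m$. Expand $\P = \sum_{s_1,\ldots,s_n} c_{s_1\ldots s_n} |s_1,\ldots,s_n\rangle$ in the computational basis. For a fixed bipartition, regroup the coefficients into a matrix $M$ indexed by the $d^{|B|} = d^m$ basis states of the $B$-subsystem (rows) and the $d^{|A|}$ basis states of the $A$-subsystem (columns). The support of $\P$ is exactly the number of nonzero entries of $M$.

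**Extracting the rank condition.**

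The key step is to translate $\Tr_A \P\langle\Psi| = d^{-m} Id$ into a statement about $M$. Tracing out $A$ gives the reduced density operator $\rho_B = M M^\dagger$ (up to the fixed normalization of $\P$), and the $AME$ condition forces $M M^\dagger$ to be a nonzero multiple of the $d^m \times d^m$ identity. In particular $M M^\dagger$ has full rank $d^m$, so $M$ itself has rank $d^m$; equivalently, the $d^m$ rows of $M$ are linearly independent. Moreover $\rho_B$ being a multiple of the identity means distinct rows of $M$ are orthogonal and every row has the same (nonzero) norm — so in particular no row of $M$ is the zero vector.

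**Counting nonzero entries.**

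Now I would argue that each row of $M$ must have at least $d^m$ nonzero entries. This is where a little more than a trivial count is needed: it is not enough that each row is nonzero. The cleanest route is to apply the rank argument to \emph{sub-bipartitions} or, equivalently, to observe that for \emph{every} choice of $m$ of the coordinates in $A$ (there are at least $m$ such coordinates since $|A|\ge m$), one can instead trace out only a size-$(n-2m)$ complement and a re-grouping shows a related matrix must again be of full rank $d^m$; chasing this through shows that fixing the $B$-indices and varying over any $d^m$-sized block of $A$-configurations, the corresponding $1\times d^m$ sub-row cannot vanish. Summing the bound "at least one nonzero entry in each such block" over the $d^{|A|-m}$ blocks gives at least $d^{|A|-m} \cdot 1$; combined over all $d^m$ rows this is too weak, so instead I would directly use that each nonzero row of $M$, when its $d^m$ orthogonality constraints against the other rows are imposed, is forced to have Hamming weight $\ge d^m$ — but the slick version is: the $d^m \times d^m$ matrix obtained by restricting $M$ to any set of $d^m$ columns on which some row is supported must itself have rank $d^m$ for the orthogonality/identity structure to propagate, forcing every row to meet every such column block. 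Cleanly, one concludes each of the $d^m$ rows has $\ge d^m$ nonzero entries, so the total support is $\ge d^m\cdot 1 = d^m$...

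Let me restate that last paragraph more carefully, since the naive count only gives $d^m$ from "each row nonzero" is wrong — it gives only $d^m$ total, which is in fact exactly the claimed bound! So:

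**Counting nonzero entries (clean version).**

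Finally I observe that the support of $\P$ equals the number of nonzero entries of $M$, and since $M$ has $d^m$ rows each of which is a \emph{nonzero} vector (as shown above, no row can vanish, because $MM^\dagger$ is a nonzero multiple of the identity and hence has no zero diagonal entry), $M$ has at least $d^m$ nonzero entries, one in each row. Therefore the support of $\P$ is at least $d^m = d^{\floor{n/2}}$, as claimed. The only subtlety to handle with care is the reduction showing $\rho_B = MM^\dagger$ has strictly positive diagonal — i.e. that the $AME$ normalization constant is nonzero — which is immediate since $\P \ne 0$; and verifying that the regrouping of coefficients into $M$ correctly implements the partial trace, which is a routine index computation.
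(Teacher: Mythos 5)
Your final clean count is correct and is essentially the linear-algebra argument the paper alludes to (the proposition is stated there without an explicit proof): the $AME$ condition for a bipartition with $|B|=\floor{n/2}$ makes the reduced matrix $MM^\dagger$ a positive multiple of the identity, so each of the $d^{\floor{n/2}}$ rows of the coefficient matrix $M$ is nonzero, giving at least $d^{\floor{n/2}}$ nonzero coefficients. Note only that the stronger claim floated in your first counting paragraph --- that each row must contain at least $d^{\floor{n/2}}$ nonzero entries --- is false in general (e.g.\ $\sum_i |i\,i\rangle/\sqrt{d}$ for $n=2$ has exactly one nonzero entry per row), but you rightly discard it, since one nonzero entry per row already yields the claimed bound.
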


\begin{definition}
Given two integers $n$, $d$, with $n,d\ge 2$, we will say that an $AME(n,d)$ state $\P$ is of minimal support if the support of $\P$ in the computational basis is $d^{\floor{n/2}}$.
\end{definition}

The problem of finding $AME(n,d)$ states and that of finding $AME(n,d)$ states of minimal support are different ones. For example, in the above references, it is proved that $AME(6,2)$ states exist, but none of them is minimally supported.

It is readily seen that any $AME(n,d)$ state of minimal support can be expressed as $\P=\sum_{\nu=0}^{2^m-1}\e^{i\theta_\nu}|k_{\nu,1},\ldots,k_{\nu,n}\rangle$, for certain $(k_{\nu,1},\ldots,k_{\nu,n})$, $0\le k_{\nu,j}\le d-1$ and real phases $\theta_\nu$. It follows that if a state $\P$ is $AME(n,d)$ with minimal support for a trial of the phases $\theta_\nu$, the state obtainded from $\P$ with any other trial $\theta_\nu^\prime$ is also $AME(n,d)$ of minimal support. In particular, we can set $\theta_\nu=0$. Thus, when it comes to determine whether $AME(n,d)$ states of minimal support exist, only the set of tuples $(k_{\nu,1},\ldots,k_{\nu,n})$ matter.

\section{Characterization with MDS codes}\label{sec:mdscodes}

Let's briefly summarize some standard definitions of the theory of codes. See \cite{macwilliams-sloane} for more details.

A code $\CC$ over the alphabet $\Z_d=\{0,\ldots,d-1\}$ and wordlength $n$ is a subset of $\Z_d^n$. The Hamming distance of two words $w,w'\in\CC$, $D_H(w,w')$, is the number of coordinates on wich $w$ and $w'$ differ. Denote by $\delta\in\Z^+$ the minimum of $D_H(w,w')$, where $w$ and $w'$ are different words of the code. We call $\delta$ the minimum distance of $\CC$. The well known Singleton bound establishes that $|\CC|\le d^{n-\delta+1}$. A code is called \lq\lq maximun distance separable'', MDS, if the singleton bound is an equality. 

In this case, if we define the positive integer $k=n-\delta +1$, we have that $|\CC|=d^k$. We call $k$ the dimension of $\CC$. We can call $k$ the combinatorial dimension of $\CC$, to stress the fact that $\CC$ might have no particular algebraic structure for a general integer $d$.

The following property follows from the characterization of MDS codes in terms of orthogonal arrays, and can be found in \cite{huntemann} or \cite{macwilliams-sloane}.

\begin{proposition}\label{th:codeminusone}
If there is an MDS code, not necessarily linear, of length \(n\) and dimension \(k\), there is also an MDS code of length \(n-1\) and dimension \(k\). 
\end{proposition}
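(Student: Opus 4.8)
The plan is to exhibit the MDS code of length $n-1$ directly by projecting the given code $\CC$ onto any $n-1$ of its coordinate positions. Concretely, let $\CC\subseteq\Z_d^n$ be an MDS code of length $n$ and combinatorial dimension $k$, so $|\CC|=d^k$ and the minimum distance is $\delta=n-k+1$. Fix the last coordinate (say) and define $\CC'=\{(w_1,\ldots,w_{n-1}) : (w_1,\ldots,w_{n-1},w_n)\in\CC\text{ for some }w_n\}\subseteq\Z_d^{n-1}$, the image of $\CC$ under deleting the final entry. The two things to check are that this deletion map is injective on $\CC$ — so that $|\CC'|=d^k$ — and that $\CC'$ has minimum distance $\delta-1=(n-1)-k+1$, which together make $\CC'$ an MDS code of length $n-1$ and dimension $k$.

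For injectivity: if two codewords of $\CC$ agreed on the first $n-1$ coordinates, they would differ in at most one coordinate, i.e. have Hamming distance $\le 1$; since distinct codewords of $\CC$ have distance $\ge\delta=n-k+1\ge 2$ (here one uses $k\le n-1$, which holds because $\delta\ge 1$; the case $k=n$ is trivial as then $\CC=\Z_d^n$ and $\CC'=\Z_d^{n-1}$), this is impossible. Hence the map is a bijection onto $\CC'$ and $|\CC'|=|\CC|=d^k$. For the minimum distance: take two distinct codewords of $\CC'$, lift them to the unique codewords $w,\tilde w\in\CC$ they come from; then $D_H(w,\tilde w)\ge\delta$, and deleting one coordinate decreases the Hamming distance by at most $1$, so the two words of $\CC'$ differ in at least $\delta-1$ positions. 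Thus $\CC'$ has minimum distance at least $\delta-1=(n-1)-k+1$, and then the Singleton bound forces equality, so $\CC'$ is MDS with the claimed parameters.

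I do not expect a serious obstacle here; the argument is elementary once one decides to use coordinate deletion (a "shortening"/"puncturing"-type operation) rather than going through the orthogonal-array dictionary mentioned in the text. The one point that needs a little care is the degenerate boundary case $k=n$ (equivalently $\delta=1$), where "minimum distance $\delta-1=0$" is vacuous; this is handled by observing that an MDS code with $\delta=1$ is all of $\Z_d^n$, whose projection is all of $\Z_d^{n-1}$, again MDS. Everything else is just the observation that deleting a single coordinate changes Hamming distances by at most one, combined with the Singleton bound to upgrade the resulting inequality to the MDS equality.
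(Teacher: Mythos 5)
Your argument is correct in the range that matters, but it is a genuinely different route from the paper's: the paper does not prove the proposition at all, it simply invokes the equivalence between MDS codes and orthogonal arrays (an MDS code of length $n$ and dimension $k$ over $\Z_d$ is an orthogonal array of strength $k$ and index one; deleting a column preserves these properties) and refers to \cite{huntemann} and \cite{macwilliams-sloane}. You instead puncture the code directly: deleting a coordinate is injective on $\CC$ because two words agreeing on $n-1$ positions would be at Hamming distance at most $1<\delta$, so $|\CC'|=d^k$; deletion lowers distances by at most one, so the minimum distance of $\CC'$ is at least $\delta-1=(n-1)-k+1$; and the Singleton bound then forces equality, giving an MDS code of length $n-1$ and dimension $k$. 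This is the standard puncturing argument, it is elementary and self-contained, and it avoids the orthogonal-array dictionary entirely; what the paper's citation-based route buys is brevity and a pointer to the general theory. One small caveat: your treatment of the degenerate case $k=n$ (i.e.\ $\delta=1$) does not actually rescue the statement there, since $\Z_d^{n-1}$ is indeed MDS but has dimension $n-1$, not $k=n$, and no code of size $d^n$ can live in $\Z_d^{n-1}$; the honest reading is that the proposition tacitly assumes $k\le n-1$ (equivalently $\delta\ge 2$), which is automatic in every application in the paper, where $\delta=\ceil{n/2}+1\ge 2$. With that assumption made explicit, your proof is complete.
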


The following characterization has been proved  in \cite{ame64} and \cite{ameexistenceandapplications}.

\begin{theorem}\label{th:codes}
Given two integers $n, d\ge 2$, an $AME(n,d)$ state of minimal support exists if, and only if, there is and MDS code over $\Z_d$, of wordlength $n$, and minimum distance $\delta = \ceil{n/2}+1$, equivalently $k=\floor{n/2}$. The words in the code and the kets in the state are in one onto one correspondence.
\end{theorem}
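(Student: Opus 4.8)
The plan is to establish both directions by making the dictionary between kets and codewords completely explicit and then translating the AME condition site-by-site. First I would set up notation: write an $AME(n,d)$ state of minimal support as $\P = \sum_{\nu=0}^{d^{m}-1} |k_{\nu,1},\ldots,k_{\nu,n}\rangle$ with $m=\floor{n/2}$ (using the reduction to trivial phases noted after the definition of minimal support), and let $\CC = \{(k_{\nu,1},\ldots,k_{\nu,n}) : \nu\}$ be the corresponding set of $n$-tuples over $\Z_d$. The two things to prove are: (i) the $d^m$ tuples are distinct and $\CC$ has minimum distance $\delta = n-m+1 = \ceil{n/2}+1$, so $\CC$ is an MDS code with $k=m$; and conversely (ii) any such MDS code, read as a list of kets with unit amplitudes, yields an $AME(n,d)$ state of minimal support.

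For direction (i), I would look at an arbitrary subset $A\subset\{1,\ldots,n\}$ with $|A| = n-m \ge m = |B|$, and compute $\Tr_A \P\langle\Psi|$. Tracing out the sites in $A$ gives a sum over pairs $\nu,\nu'$ of terms $|k_{\nu,B}\rangle\langle k_{\nu',B}|$ restricted to those pairs with $k_{\nu,A} = k_{\nu',A}$ (agreement on all traced-out coordinates). The AME condition says this equals $\frac1{d^m} Id$ on $(\C^d)^{\otimes m}$. Matching the diagonal forces, for every value $x\in\Z_d^{|A|}$ on the $A$-coordinates, exactly... well, more carefully: matching off-diagonal entries to zero forces that no two distinct tuples agree on all coordinates of any such $A$; since $|A| = n-m = \delta-1$ (with $\delta = n-m+1$), this says any two distinct codewords differ in strictly more than $n-m-1$... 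I would phrase it as: two distinct codewords cannot agree on $n-m$ coordinates, hence differ in at least... hmm, agreeing on $n-m$ coordinates means differing in at most $m$; forbidding agreement on \emph{any} $n-m$ coordinates means the distance is $> m$... I need to be careful here and check the arithmetic against $\delta = \ceil{n/2}+1$, distinguishing $n$ even from $n$ odd. Then matching the diagonal entries gives that every $A$-pattern is hit exactly $d^{|A|-m}$ times, which combined with $|\CC| = d^m$ is exactly the orthogonal-array / Singleton-equality condition, so $\CC$ is MDS of dimension $k = m = \floor{n/2}$. In particular the $d^m$ tuples are automatically distinct (else the support would be smaller than $d^m$), so the correspondence is one-to-one.

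For the converse (ii), I would run the same computation in reverse: given an MDS code of length $n$, minimum distance $\delta = \ceil{n/2}+1$, hence $|\CC| = d^{\floor{n/2}} = d^m$, form $\P = \sum_{w\in\CC}|w\rangle$. For any bipartition with $|B|=m\le|A|=n-m$, I would use the orthogonal-array property of MDS codes — namely that projecting the codewords onto any $n-m+1 = \delta$ coordinates is injective, and onto any $\le \delta-1$ coordinates is "balanced" (each pattern appears equally often) — applied here to the $A$-coordinates (of which there are $n-m = \delta-1$) and to the $B$-coordinates. Injectivity on $\delta$ coordinates kills all off-diagonal terms in $\Tr_A\P\langle\Psi|$, and the balance property makes the diagonal constant, giving exactly $\frac1{d^m}Id$. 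This is essentially the standard MDS $\Leftrightarrow$ orthogonal array equivalence, which I would cite from \cite{macwilliams-sloane} rather than reprove.

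The main obstacle is bookkeeping rather than conceptual: getting the floor/ceiling arithmetic exactly right so that "$A$ has $n-m$ coordinates" matches "$\delta-1$ coordinates" in both parities of $n$, and making sure the partial-trace computation correctly identifies which pairs $(\nu,\nu')$ survive and with what multiplicity. A secondary point needing care is that one must check \emph{all} bipartitions (equivalently all choices of $n-m$ coordinates to trace out), not just one, but since the MDS / orthogonal-array property is symmetric in the coordinates this is immediate once the single-bipartition computation is done. I would also remark that Proposition \ref{th:codeminusone} is not needed for this theorem itself but becomes relevant when deriving existence consequences.
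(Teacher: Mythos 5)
Your overall plan---set the phases to zero, read the kets of a minimal-support state as a code $\CC\subset\Z_d^n$ with $|\CC|=d^{m}$, $m=\floor{n/2}$, and translate the condition $\Tr_A|\Psi\rangle\langle\Psi|=d^{-m}Id$ entry by entry into the orthogonal-array characterization of MDS codes---is the standard route; note that the paper itself does not prove this theorem but quotes it from \cite{ame64} and \cite{ameexistenceandapplications}, and your plan is essentially the argument found there. However, the key translation step is garbled as written, and for odd $n$ it would fail. The diagonal of $\rho_B=\Tr_A|\Psi\rangle\langle\Psi|$ is indexed by patterns on the \emph{retained} coordinates $B$, not on the traced-out coordinates $A$: the entry at $y\in\Z_d^{|B|}$ equals $d^{-m}\,|\{\nu: k_{\nu,B}=y\}|$ (a cross term with $\nu\ne\nu'$ cannot hit the diagonal, since agreement on both $A$ and $B$ would force equal codewords). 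So what the AME condition gives is that every pattern on every set of $\floor{n/2}$ coordinates occurs exactly once, i.e.\ $\CC$ is an orthogonal array of strength $\floor{n/2}$ and index $1$. Your claim that ``every $A$-pattern is hit exactly $d^{|A|-m}$ times'' is impossible by counting whenever $|A|>m$ (there are $d^{|A|}$ patterns but only $d^{m}$ codewords), i.e.\ whenever $n$ is odd; the same slip recurs in the converse, where an MDS code of dimension $k$ is balanced on any $\le k$ coordinates, not on any $\le\delta-1=n-k$ coordinates, so the balance property must be applied to the $B$-side, while the off-diagonal terms are killed by the minimum distance (two codewords agreeing on $A$, with $|A|\ge\ceil{n/2}$, would be at distance $\le\floor{n/2}<\delta$).

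Relatedly, your off-diagonal argument in the forward direction only yields $\delta\ge\floor{n/2}+1$, which for odd $n$ falls one short of the claimed $\delta=\ceil{n/2}+1$; it is the corrected diagonal condition that closes this gap: injectivity of the projection onto any $\floor{n/2}$ coordinates gives $\delta\ge n-\floor{n/2}+1$, and the Singleton bound with $|\CC|=d^{\floor{n/2}}$ gives the reverse inequality, so $\CC$ is MDS with $k=\floor{n/2}$. With these corrections (and the observation that the case $|B|<\floor{n/2}$ follows from the case $|B|=\floor{n/2}$ by further tracing, or from balance on fewer coordinates), your plan is sound; your remark that Proposition \ref{th:codeminusone} plays no role in this theorem is correct.
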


For any given local dimension $d$, there are always $AME(n=3,d)$ states of minimal support, just consider $d^{-1/2}(|0\,0\,0\rangle+\cdots+|d-1\,d-1\, d-1\rangle)$. The set of all integers $n\ge 2$ such that an $AME(n,d)$ state of minimal support exists is therefore nonempty and, in fact, it is an interval, see \cite{bernal}, and the references given there.

\begin{proposition}
For any integer $d\ge 2$, there is an integer $\NN(d)$ such that, an $AME(n,d)$ state of minimal support exists if, and only if, $n\le \NN(d)$.
\end{proposition}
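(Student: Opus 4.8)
The plan is to show that the set $S:=\{\,n\ge 2:\text{an }AME(n,d)\text{ state of minimal support exists}\,\}$ is nonempty, consists of consecutive integers starting at $2$, and is bounded above; then $\NN(d):=\max S$ satisfies the claim and the displayed biconditional is immediate. Nonemptiness is already in the text ($2\in S$ via the maximally entangled pair state, and $3\in S$ as noted there). By Theorem~\ref{th:codes}, $n\in S$ if and only if there is an MDS code over $\Z_d$ of wordlength $n$ and combinatorial dimension $k=\floor{n/2}$, so I will argue entirely with codes.

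For the ``consecutive integers'' part, suppose $n\in S$ with $n\ge 3$ and fix an MDS code of wordlength $n$ and dimension $\floor{n/2}$. If $n=2m+1$ is odd, then $\floor{n/2}=m$, and Proposition~\ref{th:codeminusone} (puncturing one coordinate) produces an MDS code of wordlength $2m$ and dimension $m=\floor{2m/2}$, so $n-1=2m\in S$. If $n=2m$ is even, then $\floor{n/2}=m$, and shortening at one coordinate --- restricting to the codewords carrying a prescribed symbol in that coordinate (each symbol occurs there exactly $d^{m-1}$ times, since an MDS code is an orthogonal array of strength $m$ and index $1$) and deleting that coordinate --- yields a code of wordlength $2m-1$, size $d^{m-1}$ and minimum distance unchanged, hence an MDS code of dimension $m-1=\floor{(2m-1)/2}$, so $n-1=2m-1\in S$. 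Thus $n\in S$ and $n\ge 3$ imply $n-1\in S$, and with $2\in S$ this shows $S$ is an interval $\{2,3,\dots\}$, a priori possibly unbounded; this interval property is also recorded in \cite{bernal}.

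It remains to bound $S$ from above, which is the substantive point. Starting from an MDS code of wordlength $n$ and dimension $k=\floor{n/2}$ with $k\ge 2$ (that is, $n\ge 4$) and iterating the shortening step $k-2$ times, I obtain an MDS code of wordlength $n-(k-2)=\ceil{n/2}+2$ and dimension $2$; by the standard correspondence between strength-two index-one orthogonal arrays and sets of mutually orthogonal latin squares, this is equivalent to $\ceil{n/2}$ mutually orthogonal latin squares of order $d$. Since no more than $d-1$ mutually orthogonal latin squares of order $d$ exist, $\ceil{n/2}\le d-1$, i.e. $n\le 2d-2$. Hence $S\subseteq\{2,3\}\cup\{4,\dots,2d-2\}$ is finite, and $\NN(d):=\max S$ (which satisfies $3\le \NN(d)\le\max\{3,2d-2\}$) is as required. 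I expect the only delicate part to be the bookkeeping around shortening: verifying that it decreases wordlength and dimension each by one while preserving the MDS property, and that the reduced parameters are precisely those demanded by Theorem~\ref{th:codes} at the smaller length. The bound $n\le 2d-2$ is crude --- the prime-power construction already reaches $n=d+1$ --- but finiteness is all that is needed here.
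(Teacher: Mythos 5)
Your argument is correct, and it is genuinely more self-contained than what the paper does: the paper offers no proof of this proposition at all, simply asserting the interval property with a pointer to \cite{bernal}, and it quotes the bound $\NN(d)\le 2d-2$ from that reference only later. You reconstruct both halves from the toolkit already present in the text. The downward-closure step is sound: for odd $n$ you invoke Proposition~\ref{th:codeminusone} (puncturing, dimension preserved), and for even $n$ you shorten, correctly using that an MDS code of combinatorial dimension $k$ is an orthogonal array of strength $k$ and index $1$, so fixing a symbol in one coordinate selects exactly $d^{k-1}$ words, the distance does not drop on the remaining coordinates, and Singleton forces the shortened code to be MDS of length $n-1$ and dimension $k-1$ --- exactly the parameters Theorem~\ref{th:codes} demands at $n-1$. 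Your boundedness argument (shorten down to dimension $2$, read the result as $\ceil{n/2}$ mutually orthogonal latin squares via Theorem~\ref{th:cubes} with $k=2$, then use the classical bound of at most $d-1$ MOLS of order $d$) is valid and in fact reproves the estimate $\NN(d)\le 2d-2$ that the paper imports from \cite{bernal}; it also correctly degenerates to $\NN(2)=3$ and $\NN(3)\le 4$. The only external ingredients beyond the paper are the standard shortening construction and the $d-1$ bound on MOLS, both classical; stating the latter with a reference would make the write-up airtight. What your route buys is a proof of the proposition entirely inside the coding/latin-square dictionary the paper itself sets up, rather than a citation; what the paper's route buys is brevity, since \cite{bernal} already contains the interval property and the $2d-2$ bound.
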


On the case where $d$ is a prime power, the alphabet $\{0,\ldots,d-1\}$ can be given a unique field structure, $GF(d)$. Then, we can resort to the theory of Red Solomon codes and their extensions, that are known to be MDS. We have then the following existence result, see \cite{bernal}, \cite{ameexistenceandapplications} and the references therein.

\begin{theorem}\label{th:primepowercase}
Let $d\ge 3$ be an integer that is a power of a prime number. Then, there is an $AME(d+1,d)$ state of minimal support. Thus $\NN(d)\ge d+1$.
\end{theorem}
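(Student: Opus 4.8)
The plan is to use Theorem~\ref{th:codes} to reduce the statement to the construction of a single MDS code with prescribed parameters, and to produce that code as an extended Reed--Solomon code over the field $GF(d)$.

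Write $n=d+1$ and $k=\floor{n/2}=\floor{(d+1)/2}$. For an MDS code one has $\delta=n-k+1$, and $n-\floor{n/2}+1=\ceil{n/2}+1$, so by Theorem~\ref{th:codes} it suffices to exhibit an MDS code over $\Z_d$ of wordlength $n=d+1$ and combinatorial dimension $k$. (Proposition~\ref{th:codeminusone} is not needed here, since the code will have length exactly $d+1$.)

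Since $d$ is a prime power, identify $\{0,\dots,d-1\}$ with $GF(d)$ and list its elements as $\alpha_1,\dots,\alpha_d$. Let $V$ be the $GF(d)$-vector space of polynomials over $GF(d)$ of degree strictly less than $k$, so $\dim V=k$, and define $\phi\colon V\to GF(d)^{\,d+1}$ by
\[
\phi(p)=\bigl(p(\alpha_1),\dots,p(\alpha_d),\,c_{k-1}(p)\bigr),
\]
where $c_{k-1}(p)$ denotes the coefficient of $x^{k-1}$ in $p$. This map is $GF(d)$-linear, and it is injective because a nonzero polynomial of degree $<k\le d$ cannot vanish at all $d$ points of $GF(d)$; hence $\CC:=\phi(V)$ is a linear code of length $d+1$ and dimension $k$. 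To see that $\CC$ is MDS, take $p\in V$ with $p\neq 0$ and put $t=\deg p\le k-1$: among the first $d$ coordinates $\phi(p)$ has at most $t$ zeros, hence at least $d-t$ nonzero entries, and if $t=k-1$ the last coordinate is also nonzero, giving weight at least $(d-t)+1=d-k+2=n-k+1$, whereas if $t<k-1$ the weight is already at least $d-t\ge d-(k-2)=n-k+1$. Thus every nonzero codeword has weight at least $n-k+1$, and the Singleton bound gives the reverse inequality, so the minimum distance is exactly $n-k+1$ and $\CC$ is MDS.

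It then remains only to apply Theorem~\ref{th:codes} in the converse direction: the $d^{k}$ words of $\CC$ index the kets of an $AME(d+1,d)$ state of minimal support (for instance the uniform superposition), so such a state exists, and therefore $\NN(d)\ge d+1$. The one step that is not pure bookkeeping is the weight count for the extended code — in particular arranging the extra ``leading coefficient'' coordinate so that the optimal value $n-k+1$ is attained uniformly in $\deg p$; this could also be bypassed entirely by invoking the classical fact that (doubly) extended Reed--Solomon codes are MDS, after which the conclusion is immediate from Theorem~\ref{th:codes}.
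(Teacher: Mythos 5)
Your proposal is correct and follows essentially the same route as the paper, which obtains the statement by combining Theorem~\ref{th:codes} with the classical fact that (extended) Reed--Solomon codes over $GF(d)$ are MDS; you simply spell out the singly-extended Reed--Solomon construction and its weight count explicitly rather than citing it.
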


\section{Characterization with latin hypercubes}\label{sec:hypercubes}

\begin{definition}
	Given two integers $k, d\ge 2$, a latin $k$-hypercube of order $d$ is a $k$-dimensional array of integer numbers
	\begin{equation}\label{eq:hypercube}
	L_{j_1,\ldots,j_k},
	\end{equation}
	$0\le j_\nu\le d-1$, such that $0\le L_{j_1,\ldots,j_k}\le d-1$ and that fixing all indices minus any one of them, the $d$ resulting integers (\ref{eq:hypercube}) are all diferent (and thus are all $\{0,\ldots,d-1\}$).
\end{definition}

Latin hypercubes are a generalisation of latin squares, the case where $k=2$. If in a latin hypercube, all but two of the indices are fixed, a latin square results. Latin squares and hypercubes have many applications in several areas, like combinatorial design of experiments. There are a number of nontrivial questions concerning the conditions on the existence of those structures. See, for example, \cite{dougherty}.

\begin{definition}
	Two latin squares of order $d$ $L$ and $M$ are said to be orthogonal if the $d^2$ pairs $(L_{ij},M_{ij})$ are all different. To latin $k$-hypercubes $L_{j_1,\ldots,j_k}$ and $L^\prime_{j_1,\ldots,j_k}$ are said to be orthogonal if, fixing all but two of the indices, the resulting latin squares are orthogonal. Several latin $k$-hypercubes are mutually orthogonal if any two of them are orthogonal.
\end{definition}

Latin hypercubes are connected to MDS codes by the following property \cite{dougherty}.

\begin{theorem}\label{th:cubes}
	Given $d,n\ge 2$ integers, there exists an MDS code over $\Z_d$ with wordlength $n$ and minimum distance $\delta$ if, and only if, there are $\delta-1$ mutually orthogonal latin $k$-hypercubes ($k=n-\delta+1$) of order $d$.
\end{theorem}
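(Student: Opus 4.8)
\medskip
\noindent\textbf{Proof strategy.}
The plan is to exhibit a pair of mutually inverse constructions, one turning the words of an MDS code into a family of $k$-dimensional arrays and the other reversing it, and then to match the minimum-distance condition on the code side with the latin and orthogonality conditions on the array side. Write $k=n-\delta+1$ throughout. An MDS code over $\Z_d$ of length $n$ and minimum distance $\delta$ has exactly $d^k$ words; we shall use the standard orthogonal-array description of MDS codes: a code $\CC\subseteq\Z_d^n$ with exactly $d^k$ words is MDS of minimum distance $\delta$ if, and only if, the projection of $\CC$ onto every choice of $k$ coordinates is a bijection onto $\Z_d^k$. (If all such projections are injective then no two distinct words agree on $k$ coordinates, so the distance is at least $n-k+1=\delta$; surjectivity then follows by cardinality, and the Singleton bound forces the distance to equal $\delta$.)

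\medskip
\noindent\textbf{The dictionary.}
Given an MDS code $\CC$ as above, the projection onto the first $k$ coordinates is a bijection, so each word has the form $w(\mathbf j)$ for a unique $\mathbf j=(j_1,\dots,j_k)\in\Z_d^k$ with $w(\mathbf j)_i=j_i$ for $1\le i\le k$; for $1\le m\le\delta-1$ set $L^{(m)}_{j_1,\dots,j_k}:=w(\mathbf j)_{k+m}$, a $k$-dimensional array over $\Z_d$. Conversely, from any arrays $L^{(1)},\dots,L^{(\delta-1)}$ over $\Z_d$ indexed by $\Z_d^k$, form $\CC:=\{\,(j_1,\dots,j_k,L^{(1)}_{\mathbf j},\dots,L^{(\delta-1)}_{\mathbf j})\;:\;\mathbf j\in\Z_d^k\,\}$, which has $d^k$ words and whose projection onto the first $k$ coordinates is a bijection. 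These two constructions are inverse to one another, so it is enough to show that under them ``$\CC$ is MDS'' matches ``the $L^{(m)}$ are mutually orthogonal latin $k$-hypercubes''.

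\medskip
\noindent\textbf{Matching the conditions.}
Fix a set $S$ of $k$ coordinates of $\CC$, consisting of $p$ of the first $k$ positions together with $q=k-p$ check positions, say those carrying $L^{(m_1)},\dots,L^{(m_q)}$. Parametrising $\CC$ by $\mathbf j$, the projection onto $S$ passes the $p$ indices $(j_i)_{i\in S}$ through unchanged, so it is a bijection exactly when, for every fixed value of those $p$ indices, the map sending the remaining $q$ free indices $(j_i)_{i\in\{1,\dots,k\}\setminus S}$ to $(L^{(m_1)}_{\mathbf j},\dots,L^{(m_q)}_{\mathbf j})$ is a bijection $\Z_d^q\to\Z_d^q$. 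For $q=0$ this is automatic; for $q=1$ it says precisely that fixing all indices of $L^{(m_1)}$ but one yields $d$ distinct values, i.e.\ that $L^{(m_1)}$ is a latin $k$-hypercube; for $q=2$ it says precisely that the two-dimensional sections of $L^{(m_1)}$ and $L^{(m_2)}$ cut out by fixing the same $k-2$ indices are orthogonal latin squares, i.e.\ that $L^{(m_1)}$ and $L^{(m_2)}$ are orthogonal. Letting $S$ range over all $k$-sets with $q\le 2$ (and over all frozen values) shows that these projections are all bijective if and only if the $L^{(m)}$ are latin $k$-hypercubes that are pairwise orthogonal. In particular the ``only if'' direction is done: an MDS code makes every projection bijective, in particular those with $q\le 2$, whence the $L^{(m)}$ are mutually orthogonal latin $k$-hypercubes; and the full equivalence holds when $k\le 2$, since then no $k$-set has $q\ge 3$.

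\medskip
\noindent\textbf{Main obstacle.}
What remains is the ``if'' direction for $k\ge 3$: from mutually orthogonal latin $k$-hypercubes one must deduce that $\CC$ also projects bijectively onto every $k$-set $S$ with $q\ge 3$. Equivalently: if $w(\mathbf j)$ and $w(\mathbf j')$ differ in exactly $r\ge 3$ of the first $k$ positions then they agree in at most $r-1$ check positions, so that their Hamming distance is at least $n-(k-1)=\delta$. Restricting every $L^{(m)}$ to the subarray obtained by freezing the $k-r$ indices on which $\mathbf j$ and $\mathbf j'$ agree reduces this to a statement about $\delta-1$ mutually orthogonal latin $r$-hypercubes, namely that at most $r-1$ of them can take a common value at a pair of points that differ in all $r$ coordinates. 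The cases $r=1$ (a latin $1$-hypercube is a permutation of $\Z_d$) and $r=2$ (pairwise orthogonality of the induced latin squares) are immediate; the inductive passage to larger $r$, in which the pairwise-orthogonality relations among all the two-dimensional sections through the two chosen points must be combined, is the delicate step and the main obstacle of the proof. Once the bound $\delta$ on the minimum distance is established, the Singleton bound makes it an equality, so $\CC$ is MDS, which completes the equivalence.
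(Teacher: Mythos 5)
Your dictionary (put the code in systematic form on a $k$-set of coordinates, read the $\delta-1$ check symbols as entries of $k$-dimensional arrays) is the standard orthogonal-array translation, and the ``only if'' direction is correct: an MDS code makes every $k$-coordinate projection bijective, and the $q=1$, $q=2$ projections are exactly the latin and pairwise-orthogonality conditions. Your argument is also complete whenever $k\le 2$, which covers the only instance the paper actually uses ($n=4$, $d=6$, two orthogonal latin squares). But the ``if'' direction for $k\ge 3$ is not proved: you reduce it to the claim that two index tuples differing in $r\ge 3$ positions can agree in at most $r-1$ check positions, call this the ``main obstacle'', and stop. This is a genuine gap, and it is not a routine induction you merely omitted: with the paper's (pairwise, section-wise) definition of mutual orthogonality the claim is false. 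Over $\Z_5$ take the linear cubes $L^{(1)}_{j_1j_2j_3}=j_1+j_2+j_3$, $L^{(2)}_{j_1j_2j_3}=j_1+2j_2+3j_3$, $L^{(3)}_{j_1j_2j_3}=2j_1+3j_2+4j_3$ (mod $5$). Each is a latin $3$-hypercube, and each pair is orthogonal in the paper's sense because every relevant $2\times 2$ coefficient minor is nonzero; yet $L^{(3)}=L^{(1)}+L^{(2)}$, so the words indexed by $(j_1,j_2,j_3)$ and $(j_1+1,j_2+3,j_3+1)$ differ in all three information positions and agree on all three check positions. The code your construction produces from these three cubes has minimum distance $3$, not $\delta=4$, so it is not MDS: pairwise orthogonality simply does not imply the $q\ge 3$ projection conditions.

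Consequently the ``if'' direction cannot be completed along the route you set up. What the equivalence really rests on is a stronger, set-wise notion of orthogonality -- for every $r\le k$ of the hypercubes and every choice of $k-r$ frozen indices, superimposing those $r$ hypercubes must biject onto $\Z_d^r$, i.e.\ precisely the strength-$k$ orthogonal-array property encoded by your $q\ge 3$ projections -- and that is the notion under which the result is proved in the cited source \cite{dougherty}. With that definition your dictionary goes through verbatim in both directions; with the pairwise definition you would instead have to manufacture some \emph{other} MDS code from the mere existence of a pairwise-orthogonal family, and no such argument is given. Note also that the paper itself offers no proof of this theorem (it quotes \cite{dougherty}), so the missing case cannot be discharged by appeal to anything in the text; either adopt the stronger orthogonality condition in the statement you prove, or supply the missing argument.
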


Theorems \ref{th:codes} and \ref{th:cubes} can now be read together, to have the following characterization:

\begin{theorem}
	Given two integers $n, d\ge 2$, an $AME(n,d)$ state of minimal support exists if, and only if, there are $\ceil{\frac{n}{2}}$ mutually orthogonal latin $\floor{\frac{n}{2}}$-hypercubes of order $d$.
\end{theorem}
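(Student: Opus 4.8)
The plan is to obtain the statement as a direct composition of the two biconditionals already established, namely Theorem~\ref{th:codes} and Theorem~\ref{th:cubes}, so the only real content is matching up the numerical parameters.

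First I would invoke Theorem~\ref{th:codes}: the existence of an $AME(n,d)$ state of minimal support is equivalent to the existence of an MDS code over $\Z_d$ of wordlength $n$ and minimum distance $\delta=\ceil{n/2}+1$ (equivalently combinatorial dimension $k=\floor{n/2}$). Next I would apply Theorem~\ref{th:cubes} to this very code: for that choice of $n$ and $\delta$, an MDS code over $\Z_d$ of wordlength $n$ and minimum distance $\delta$ exists if and only if there are $\delta-1$ mutually orthogonal latin $k$-hypercubes of order $d$, where $k=n-\delta+1$. Chaining the two equivalences yields that an $AME(n,d)$ state of minimal support exists if and only if $\delta-1$ mutually orthogonal latin $k$-hypercubes of order $d$ exist, for these specific $\delta$ and $k$.

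It then remains to rewrite $\delta-1$ and $k$ in the form claimed. Substituting $\delta=\ceil{n/2}+1$ gives $\delta-1=\ceil{n/2}$, and, using the elementary identity $n=\ceil{n/2}+\floor{n/2}$, one gets $k=n-\delta+1=n-\ceil{n/2}=\floor{n/2}$. Hence the number of hypercubes is $\ceil{n/2}$ and their dimension is $\floor{n/2}$, exactly as stated. Both directions of the final "if and only if" come for free, since each of Theorems~\ref{th:codes} and~\ref{th:cubes} is itself a two-sided implication.

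There is no substantive obstacle here; the statement is a dictionary entry obtained by composing two known dictionary entries. The only point requiring a line of care is the consistency of notation: one must check that the MDS code handed over by Theorem~\ref{th:codes} is of precisely the type Theorem~\ref{th:cubes} accepts as input — it is, because Theorem~\ref{th:cubes} quantifies over all MDS codes over $\Z_d$ of the given wordlength and minimum distance, with no linearity or other structural hypothesis — and that the parameter $k$ is used with the same meaning ($k=n-\delta+1$) in both places, which it is.
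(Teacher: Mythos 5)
Your proposal is correct and matches the paper's own treatment exactly: the paper states this theorem as the direct composition of Theorems~\ref{th:codes} and~\ref{th:cubes}, which is precisely your chaining of the two equivalences with the parameter check $\delta-1=\ceil{n/2}$ and $k=n-\delta+1=\floor{n/2}$. No further comment is needed.
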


We know, by proposition \ref{th:primepowercase}, that, when $d$ is a power of a prime number, there are $AME(d+1,d)$ states of minimal support. Thus the question of whether, given and integer $d\ge 3$, there are $\ceil{\frac{d+1}{2}}$ mutually orthogonal latin $\floor{\frac{d+1}{2}}$-hypercubes of order $d$, has a positive answer in the case where $d$ is the power of a prime number. It turns out that the answer is negative in the first non-prime power case, $d=6$.

Indeed, let's consider the case where $n=4$ and $d=6$. In the above characterization we have $\ceil{n/2}=\floor{n/2}=2$, so the existence of an $AME(4,6)$ state of minimal support is equivalent to the existence of two orthogonal latin squares of order $d=6$. 

However, there are no couples of orthogonal latin squares of order 6. 

The non existence of a couple of orthogonal latin squares of any order $d\equiv 2 (\textrm{mod}\, 4)$ was conjectured in 1782 by Euler \cite{euler}. The proof in the case where $d=2$ is straightforward. The case where $d=6$ was proved in 1901, \cite{tarry-1,tarry-2}. In the remaining cases, Euler's conjecture has been proved to be false \cite{bose-parker-shirkhande}.

\begin{theorem}\label{th:case-4-6}
	There is no $AME(4,6)$ state of minimal support and $\NN(6)=3$.
\end{theorem}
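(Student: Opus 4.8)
The plan is to combine the chain of equivalences already assembled in the excerpt with the classical Euler–Tarry result on orthogonal latin squares. First I would handle the non-existence of $AME(4,6)$ states of minimal support. By the combined characterization theorem, an $AME(n,d)$ state of minimal support exists if and only if there are $\ceil{n/2}$ mutually orthogonal latin $\floor{n/2}$-hypercubes of order $d$. Plugging in $n=4$ and $d=6$ gives $\ceil{n/2}=\floor{n/2}=2$, so the condition becomes: there exist two mutually orthogonal latin $2$-hypercubes of order $6$, i.e.\ a pair of orthogonal latin squares of order $6$. Tarry's theorem (\cite{tarry-1,tarry-2}) asserts no such pair exists, so no $AME(4,6)$ state of minimal support exists. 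Alternatively, one can phrase this directly via Theorem~\ref{th:codes}: such a state is equivalent to an MDS code over $\Z_6$ of wordlength $4$ and minimum distance $3$ (so $k=2$), and by Theorem~\ref{th:cubes} that is again equivalent to $\delta-1=2$ mutually orthogonal latin $2$-hypercubes of order $6$; either route reaches the same contradiction.

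Next I would compute $\NN(6)$. Recall from the Proposition that the set of $n$ for which an $AME(n,6)$ state of minimal support exists is exactly the interval $\{2,\ldots,\NN(6)\}$, and that $n=3$ always works (the state $6^{-1/2}(|0\,0\,0\rangle+\cdots+|5\,5\,5\rangle)$), so $\NN(6)\ge 3$. It remains to show $\NN(6)<4$, i.e.\ $\NN(6)\le 3$. Suppose for contradiction that $\NN(6)\ge 4$. Since the existence values form an interval containing $\NN(6)$, there would be an $AME(4,6)$ state of minimal support, contradicting the first part. Hence $\NN(6)=3$.

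I would also remark, for completeness, on why one does not even need to invoke Proposition~\ref{th:codeminusone} or a monotonicity lemma separately here: the interval structure of the existence set (stated in the excerpt's Proposition and attributed to \cite{bernal}) already packages the needed monotonicity, so ruling out $n=4$ rules out all $n\ge 4$ at once.

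The main obstacle is not in the deductive structure, which is short, but in the fact that the whole argument rests on the non-existence of a pair of orthogonal latin squares of order $6$ — the resolved case $d=6$ of Euler's $36$-officers conjecture. This is a deep combinatorial fact (first established by Tarry via an essentially exhaustive analysis, and since reproved by shorter arguments), and it is the genuine content behind Theorem~\ref{th:case-4-6}; everything else is bookkeeping through the equivalences $AME \leftrightarrow$ MDS codes $\leftrightarrow$ mutually orthogonal latin hypercubes. I would therefore cite \cite{tarry-1,tarry-2} (and note \cite{euler} for the conjecture and \cite{bose-parker-shirkhande} for context on the other residues mod $4$) rather than attempt to reprove it.
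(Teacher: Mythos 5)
Your proposal is correct and follows essentially the same route as the paper: the combined characterization reduces $AME(4,6)$ of minimal support to a pair of orthogonal latin squares of order $6$, ruled out by Tarry's resolution of Euler's conjecture, and then $\NN(6)=3$ follows from $\NN(d)\ge 3$ together with the interval property of the existence set. The only difference is cosmetic — you spell out the monotonicity step that the paper leaves implicit in its one-line proof.
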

\begin{proof}
	We have seen that $\NN(6)\le 3$. As we noted earlier, $\NN(d)\ge 3$ for any $d\in\N$, $d\ge 2$.
\end{proof}

\section{The MDS conjecture}\label{sec:mdsconjecture}

For any integers $k$ and $d$, we define $M(k,d)$ to be the maximum length of the MDS codes of combinatorial dimension $k$ over an alphabet of size $d$. If there are MDS codes of any length of dimension $k$ and alphabet size $d$, we define $M(k,d)=\infty$. 

If $d\ge 2$ is a power of a prime number, we define $L(k,d)$ as the maximum length of the \emph{linear} MDS codes over the field $GF(d)$ of dimension $k$ over $GF(d)$. In this case, $L(k,d)\le M(k,d)$.

The \emph{MDS conjecture} states that, for $d$ power of a prime number, $L(k,d)=d+2$, when $d=2^j$, and $k\in\{3,d-1 \}$, and $L(k,d)=d+1$ otherwise. This conjecture has been extensively studied and it is known to hold in many particular cases. See \cite{huntemann} for a detailed account of known results. An important recent advance is that the MDS conjecture is true when $d$ is prime, see \cite{simeonball}.

For $d$ power of a prime, we can call \emph{general MDS conjecture} the same statement as above with $L$ replaced by $M$. See \cite{huntemann} for the definition of the general MDS conjecture when $d$ is not a prime power and known results about it.

\section{Several more results}

	We already know that $\NN(2)=3$, since there is no $AME(4,2)$ state, even non-minimally supported, \cite{higuchisudbery}, $\NN(3)=4$, since $\NN(3)\le 2\cdot 3-2$, \cite{bernal}.

By theorem \ref{th:primepowercase}, $\NN(d)\ge d+1$ if $d$ is a prime power. The following result shows that we don't have an equality in all cases.

\begin{theorem}
	$\NN(4)=6$, $\NN(5)=6$, $\NN(7)=8$.
\end{theorem}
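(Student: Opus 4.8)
The plan is to establish the three equalities separately, using Theorem~\ref{th:codes} to translate each into a statement about MDS codes over $\Z_d$, and then combining the prime-power existence result (Theorem~\ref{th:primepowercase}), the length-reduction lemma (Proposition~\ref{th:codeminusone}), and classical nonexistence bounds from coding theory.

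\textbf{Lower bounds.} For $d=4$ and $d=5$, Theorem~\ref{th:primepowercase} immediately gives $\NN(d)\ge d+1$, i.e. $\NN(4)\ge 5$ and $\NN(5)\ge 6$; likewise $\NN(7)\ge 8$. So the content is in pushing $\NN(4)$ from $5$ up to $6$, and in the matching upper bounds. For $\NN(4)\ge 6$ I would exhibit an $AME(6,4)$ state of minimal support directly, equivalently an MDS code over $\Z_4$ of wordlength $6$ and dimension $k=3$ (minimum distance $\delta=4$). Concretely this is a set of $4^3=64$ codewords in $\Z_4^6$ pairwise at Hamming distance $\ge 4$; such a code exists — it is the (hexacode-type) quaternary $[6,3,4]$ construction, or equivalently $3$ mutually orthogonal latin cubes of order $4$ via Theorem~\ref{th:cubes}. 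I would either write down a generator matrix over $GF(4)$ and check the minimum distance, or cite the known existence of the quaternary hexacode.

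\textbf{Upper bounds.} Here I would invoke Proposition~\ref{th:codeminusone} contrapositively: if $\NN(d)\ge n$ then there is an MDS code of length $n$ and dimension $\floor{n/2}$, hence (iterating) MDS codes of every smaller length at that dimension; so to rule out $\NN(d)\ge n$ it suffices to rule out a single MDS code of length $n$, dimension $\floor{n/2}$. For $\NN(4)\le 6$ I must show there is no $AME(7,4)$ state of minimal support, i.e. no MDS code over $\Z_4$ of length $7$, dimension $3$, distance $5$; equivalently $M(3,4)\le 6$. This is exactly the statement that $4$ mutually orthogonal latin cubes of order $4$ do not extend, or in the planar language that there is no MDS code meeting the relevant bound — it follows from $M(3,4)=6$, a known fact (the general MDS conjecture is verified for $d=4$; see the account cited in Section~\ref{sec:mdsconjecture}). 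Similarly $\NN(5)\le 6$ amounts to $M(3,5)\le 6$, and $\NN(7)\le 8$ to $M(4,7)\le 8$; both are instances of the general MDS conjecture known to hold for these small parameters (for $d=5,7$ prime even the linear MDS conjecture is a theorem by the result of \cite{simeonball}, and one must additionally note the nonlinear bound coincides). I would phrase each upper bound as: "$\NN(d)\ge n$ would give an MDS code contradicting the known value of $M(\floor{n/2},d)$."

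\textbf{Main obstacle.} The delicate point is the upper bounds, because they require the \emph{nonlinear} MDS bound $M(k,d)$, not just the linear one $L(k,d)$. For $d=5,7$ one cannot simply quote $L(k,d)=d+1$; one needs that no nonlinear MDS code beats this, which for these tiny cases is classical (orthogonal-array / latin-square arguments, or a direct counting argument on the $d^k$ codewords) but must be cited or sketched carefully. I would lean on the references in Section~\ref{sec:mdsconjecture} (\cite{huntemann}, and \cite{simeonball} for the prime cases) and, for $d=4$, on the known determination $M(3,4)=6$. The $AME(6,4)$ construction for the $\NN(4)\ge 6$ half is routine once the hexacode is written down; the rest is assembling the inequalities $d+1\le \NN(d)$ (or $6\le\NN(4)$) and $\NN(d)\le$ [the bound forced by $M$] into the three equalities.
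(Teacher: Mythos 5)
Your overall architecture (translate everything through Theorem~\ref{th:codes}, get the lower bounds from Theorem~\ref{th:primepowercase} plus the quaternary $[6,3,4]$ hexacode for $\NN(4)\ge 6$, and get the upper bounds by ruling out one MDS code of length $n$ and dimension $\floor{n/2}$) is exactly the paper's architecture, and your $d=4$ argument is fine: the paper gets $\NN(4)\ge 6$ from $L(3,4)=4+2$ and closes with the bound $\NN(4)\le 2d-2=6$ of \cite{bernal}, while your alternative $M(3,4)\le 6$ also holds for general codes (it is the classical Bush bound for index-unity orthogonal arrays of strength $3$ over an even alphabet), so that half is only a cosmetic difference.

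The genuine gap is in your upper bounds for $d=5$ and especially $d=7$. You correctly identify that you need the \emph{nonlinear} quantities $M(3,5)\le 6$ and $M(4,7)\le 8$, but you then assert that these are "classical" or follow from \cite{simeonball} plus the remark that "the nonlinear bound coincides". Ball's theorem concerns linear codes only, and the coincidence of the general and linear bounds for alphabet sizes $5$ and $7$ is not a classical counting fact: it is precisely the 2015 classification theorem of Kokkala, Krotov and \"Ostergard \cite{kokkalaetal} (every MDS code over $\Z_5$ or $\Z_7$ with minimum distance at least $3$ is obtained from a linear one by permuting coordinates and permuting symbols within coordinates), and this is the result the paper actually invokes to exclude $AME(7,5)$ and $AME(9,7)$ states of minimal support. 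For $d=5$, $k=3$ your case could in fact be rescued classically, since Bush's odd-alphabet bound for strength-$3$ index-unity orthogonal arrays already gives length at most $d+1=6$; but for $d=7$, $k=4$ the classical orthogonal-array bounds only give length at most $9$, which is exactly the length you must exclude, so no "direct counting argument on the $d^k$ codewords" of the kind you gesture at will close that case. As written, the claim $\NN(7)\le 8$ is therefore unsupported; you need to cite \cite{kokkalaetal} (or an equivalent nonlinear classification result), not \cite{simeonball} or \cite{huntemann}.
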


\begin{proof}
	Suppose that $d=4$. Pick $k=3$. It is known that, for $k=3$ and $d$ an even prime power, $L(k=3,d) = d+2$ \cite{huntemann,macwilliams-sloane}. If $n=6$, then $\floor{n/2}=3=k$, so we have the right dimension to apply theorem \ref{th:codes} and we have a minimally supported AME state with $n=6$ and $k=4$ given by a linear code over $GF(4)$. Then $\NN(4)\ge 6$. Conversely, we have that $\NN(4)\le 2d-2=6$, \cite{bernal}.
	
	For the cases where $d=5$ or $d=7$, we use a result proved in \cite{kokkalaetal}, that for $d\in\{5,7 \}$, $k\ge 2$ and $\delta=n-k+1\ge 3$, there exists a, not necessarily linear, MDS code of length $n$ and combinatorial dimension $k$ over $\Z_d$ if, and only if, $n\le d+1$. This is like the general MDS conjecture for $d=5$ or $d=7$, but for codes with minimum distance $\delta \ge 3$.
	
	This result of \cite{kokkalaetal} follows by proving that the general MDS codes, with the appropiate minimum distance, can be obtained from linear MDS codes with the same parameters by a permutation of the code coordinates and a permutation of symbols independently in each coordinate.
		
	Applying the result to $d=5$ and $k=3$ we see that there is no $AME(7,5)$ state of minimal support, as noted in \cite{bernal}; Applying it to $d=7$ and $k=4$ we see that there is no $AME(9,7)$ state of minimal support.
\end{proof}

We can say something about other cases, conditionally to the validity of the general MDS conjecture, as stated in section \ref{sec:mdsconjecture}.

\begin{proposition}
	Let $d \ge 8$ be a power of a prime number. Suppose that the general MDS conjecture holds for the case where the alphabet size is $d$ and the dimension is  $\floor{\frac{d+2}{2}}$. Then $\NN(d)=d+1$.
\end{proposition}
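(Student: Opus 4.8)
The plan is to reduce the statement to a counting question about $M(k,d)$ with $k=\lfloor\frac{d+2}{2}\rfloor$, and then invoke the general MDS conjecture in exactly that case. First I would recall from Theorem~\ref{th:codes} that an $AME(n,d)$ state of minimal support exists iff there is an MDS code over $\Z_d$ of wordlength $n$ and combinatorial dimension $k=\lfloor n/2\rfloor$. So $\NN(d)$ is the largest $n$ for which an MDS code of length $n$ and dimension $\lfloor n/2\rfloor$ exists. The key observation is that, by Proposition~\ref{th:codeminusone}, if an MDS code of length $n$ and dimension $k$ exists then so does one of length $n-1$ and dimension $k$; hence for a fixed dimension $k$, the set of lengths realizable is an interval $\{k,k+1,\ldots,M(k,d)\}$ (it contains $k$ trivially). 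Therefore an $AME(n,d)$ state of minimal support exists iff $M(\lfloor n/2\rfloor,d)\ge n$.

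Next I would pin down which value of $k$ is the binding constraint. We already know $\NN(d)\ge d+1$ by Theorem~\ref{th:primepowercase}, so I only need to show $\NN(d)\le d+1$, i.e.\ that no $AME(d+2,d)$ state of minimal support exists. For $n=d+2$ we have $k=\lfloor\frac{d+2}{2}\rfloor$, so by the reduction above it suffices to show $M(\lfloor\frac{d+2}{2}\rfloor,d)<d+2$, i.e.\ $M(\lfloor\frac{d+2}{2}\rfloor,d)\le d+1$. This is precisely the general MDS conjecture for alphabet size $d$ and dimension $\lfloor\frac{d+2}{2}\rfloor$: since $d\ge 8$, this dimension $k=\lfloor\frac{d+2}{2}\rfloor$ satisfies $3<k<d-1$ (one checks $k\ge 5$ and $k\le d-3$ for $d\ge 8$), so the exceptional $2^j$ cases $k\in\{3,d-1\}$ do not arise and the conjecture gives $M(k,d)=d+1$, not $d+2$. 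Hence $M(\lfloor\frac{d+2}{2}\rfloor,d)=d+1<d+2$, so no such code of length $d+2$ exists, giving $\NN(d)\le d+1$. Combined with $\NN(d)\ge d+1$ this yields $\NN(d)=d+1$.

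The step I expect to be the main obstacle — and the one requiring the most care — is the bookkeeping that the relevant dimension is really $\lfloor\frac{d+2}{2}\rfloor$ and that this value is strictly between $3$ and $d-1$ for all $d\ge 8$, so that we are squarely in the regime where the conjecture predicts $d+1$ rather than the even-prime-power exception $d+2$. One must also be slightly careful about parity: for $n=d+2$, $\lfloor n/2\rfloor = \lfloor\frac{d+2}{2}\rfloor$ equals $\frac{d+2}{2}$ when $d$ is even and $\frac{d+1}{2}$ when $d$ is odd, but in both cases the hypothesis of the proposition is stated for exactly this dimension, so the invocation is clean. Everything else — the interval structure of realizable lengths, and the deduction $\NN(d)\le d+1 \iff M(\lfloor\frac{d+2}{2}\rfloor,d)\le d+1$ — follows directly from Theorem~\ref{th:codes} and Proposition~\ref{th:codeminusone}.
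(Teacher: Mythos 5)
Your proposal is correct and follows essentially the same route as the paper: take $k=\floor{\frac{d+2}{2}}$, check $3<k<d-1$ for $d\ge 8$, invoke the general MDS conjecture to get $M(k,d)=d+1$ so no MDS code of length $d+2$ and dimension $k$ exists, hence $\NN(d)\le d+1$, and combine with Theorem~\ref{th:primepowercase} for the lower bound. Your extra remark on the interval structure of realizable lengths via Proposition~\ref{th:codeminusone} is harmless but not needed, since the paper's Proposition on $\NN(d)$ already makes nonexistence at $n=d+2$ suffice.
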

\begin{proof}
If $d\ge 8$, consider 
	\[k=\floor{\frac{d+2}{2}}\ge 5\] 
	
	Then $3 < k < d-1$. So $k\notin\{3,d-1\}$ and, by the general MDS conjecture, $M(k,d)=d+1$. So there is no MDS code over $GF(d)$ of length $d+2$ and combinatorial dimension $k=\floor{(d+2)/2}$, so $\NN(d)\le d+1$.
\end{proof}

The survey \cite{huntemann} gives a table of known results on the maximum length \(n\) of MDS codes with a given alphabet size \(d\le 100 \) and several values of the dimension \(k\). Considering that, when looking for AME states with minimal support, we are interested in codes with $k=\floor{n/2}$ we can use that reference for getting information on $\NN(d)$ for more values of $d$.

Just as an example, consider the case where \(d=10\). In \cite{huntemann}, table 5.1, we find that \(M(k=8,d=10)\le 11\). Then, we know that there is no general MDS code of length \(n=12\) and combinatorial dimension \(k=8\). From proposition \ref{th:codeminusone}, it follows that there is no MDS code of length \(n=16\) and dimension \(k=\floor{n/2}=8\). By theorem \ref{th:codes}, there is no $AME(16,10)$ state of minimal support. Then \(\NN(10)\le 15 \), which is a better bound than \(\NN(10)\le 2\cdot 10-2= 18\), given in \cite{bernal}. That kind of estimates, however seem far to be tight; the number \( \NN(d)\) seems to be much closer to $d+1$ than to $2d-2$ or $2d-3$.

The case where $d=6$ considered in theorem \ref{th:case-4-6}, that $\NN(6)=4$, can be obtained from \cite{huntemann}, table 5.1 too, it is stated there that $M(k=2,d=6)=3$, that translates in $\NN(6)\le 3$. The result is derived from a packing problem on vector spaces over Galois Fields \cite{manerisilverman}, a subject that bears also an indirect relation with Euler's conjecture as used here.

\section{Conclusion}

The theory of $AME$ states of minimal support bear relation with diverse areas of Mathematics such as MDS codes, orthogonal arrays, latin hypercubes or finite geometries.

All of those areas are known to be related with each other and have long standing problems, like the MDS conjecture, that, when solved, will throw light on the existence problem we deal with in this paper.

The author wishes to thank D. Krotov and S.T. Dougherty for useful conversations and remarks.

\end{multicols}

\end{document}